\newtheorem{Theorem}{Theorem}
\newcommand{\multiline}[1]{%
  \begin{tabularx}{\dimexpr\linewidth-\ALG@thistlm}[t]{@{}X@{}}
    #1
  \end{tabularx}
}
\begin{document}

\begin{frontmatter}
\title{Quantum Truncated Differential and Boomerang Attack}


\author{Huiqin Xie$^{1}$\corref{1}}
\author{Li Yang$^{2}$}
\cortext[1]{Corresponding author email: xiehuiqindky@163.com}
\address{1. Department of Cryptography Science and Technology, Beijing Electronic Science and Technology Institute, Beijing 100070, China\\
2. Institute  of  Information  Engineering, Chinese Academy of Sciences, Beijing 100085, China}

\begin{abstract}
Facing the worldwide steady progress in building quantum computers, it is crucial for cryptographic community to design quantum-safe cryptographic primitives. To achieve this, we need to investigate the capability of cryptographic analysis tools when used by the adversaries with quantum computers. In this article, we concentrate on truncated differential and boomerang cryptanalysis. We first present a quantum algorithm which is designed for finding truncated differentials of symmetric ciphers. We prove that, with a overwhelming probability, the truncated differentials output by our algorithm must have high differential probability for the vast majority of keys in key space. Afterwards, based on this algorithm, we design a quantum algorithm which can be used to find boomerang distinguishers. The quantum circuits of both quantum algorithms contain only polynomial quantum gates. Compared to classical tools for searching truncated differentials or boomerang distinguishers, our algorithms fully utilize the strengths of quantum computing, and can maintain the polynomial complexity while fully considering the impact of S-boxes and key scheduling.

\end{abstract}

\begin{keyword}
quantum information \sep quantum cryptanalysis \sep symmetric cryptography \sep differential attack\sep Boomerang attack


\end{keyword}

\end{frontmatter}


\section{Introduction}

Recently, research on quantum computers has continuously made new progress worldwide. Many scientists, companies and research institutions are committed to utilizing various quantum systems to achieve quantum computers. It can be foreseen that the successful development of quantum computers will have profound impacts on many fields. Cryptography is one of them. 

The great superiority of quantum computers for information processing stems from the novel properties of quantum information which differ from classical information. Quantum computers have the natural feature of parallel computing. When a $n$-qubit quantum computer processes data, its one operator is in fact operating on $2^n$ data states simultaneously. This parallelism may make some problems uncomputable in electronic computers become computable in quantum computers, such as factoring large integers, which is a difficult problem that many public key algorithms are built upon, but may be solved on quantum computers by running Shor's algorithm \cite{ref-proceeding1} . 

The threat posed by quantum computing to symmetric algorithms has also received attention. The most typical example is Grover’s algorithm \cite{ref-proceeding2}, which requires only $O(\sqrt{M})$ complexity to search a unordered database with $M$ elements, while $O(M)$ complexity is required in classical computing. Another important algorithm used to attack symmetric schemes is Simon’s algorithm \cite{ref-journal1}. It was first used for attacking Feistel ciphers \cite{ref-proceeding3,ref-journal2,ref-proceeding5} and EM schemes \cite{ref-proceeding3,ref-proceeding5}, then it is also combined with Grover’s algorithm for extracting the key of ciphers with FX, Feistel and generalized Feistel structures \cite{ref-proceeding6,ref-journal3,ref-journal4}. For SPN ciphers, Jaques et al. investigated the cryptanalysis of AES algorithm via Grover’s algorithm \cite{ref-proceeding10}. Besides the above mentioned quantum algorithms, Bernstein-Vazirani algorithm \cite{ref-journal10} was also utilized in cryptanalysis recently \cite{ref-proceeding16,ref-journal11,ref-proceeding17,ref-proceeding18}.

In addition to specific attack strategies, cryptanalytic tools are also crucial for evaluating the security of cryptosystems. In this field, quantum algorithms were first used for differential attack \cite{ref-journal7,ref-proceeding10,ref-journal11}, and then used for linear attack \cite{ref-proceeding10,ref-proceeding17,ref-journal13}. Afterwards, the quantum collision attacks to Hash functions are studied \cite{ref-proceeding11,ref-proceeding12}. There were also quantum attacks under quantum related-key model proposed \cite{ref-journal5,ref-journal6,ref-journal12}. These attacks showcased the superiority of quantum algorithms when applied to traditional cryptanalytic tools.   

\textbf{Contributions.} In this paper, we explore the applications of Bernstein-Vazirani algorithm in two variants of differential attack: truncated differential and boomerang attacks. We first design a quantum algorithm which is for searching truncated differentials that have high probability for a large proportion of keys in key space. Afterwards, based on this algorithm, we construct another quantum algorithm which is for finding boomerang distinguishers. We demonstrate the correctness of both quantum algorithms by rigorous proofs. Both quantum algorithms merely request polynomial quantum gates and qubits, and have the advantages:

\begin{enumerate}[]
\item Quantum adversaries are able to perform the proposed attacks in $Q_1$ model. Namely, there is no need for quantum queries. Compared to many proposed quantum attack algorithms \cite{ref-proceeding3,ref-proceeding4,ref-journal2,ref-proceeding5,ref-journal3,ref-journal4} that require quantum queries, our algorithms are easier to implement. 
\vskip 0.2cm

\item Classical cryptanalytic tools for finding truncated differentials with high probability or boomerang distinguishers usually cannot concern all details of the involved S-boxes when they are not small-scale. The classical tools can only search for truncated differentials or boomerang distinguishers of extremely few rounds when the S-boxes have 8-bit scale, which is very common in block ciphers. By comparison, our quantum algorithms fully utilize the superiority of quantum computing to improve this issue. They entirely characterized the S-boxes through the accurate implementation of the unitary operator of the block ciphers. 
\vskip 0.2cm

\item Classical truncated differential attacks do not involve the key scheduling under single-key attack model, but our algorithms incorporate the key scheduling into the quantum circuits and thus fully reflect its impact to the differential propagation.
\end{enumerate}

\section{Preliminaries}
The main notations and their definitions are presented in Table 1.
\begin{table}[H] 
    \caption{Notations.\label{Tab1}}
\begin{tabular}{ll}
\hline
\textbf{Notation}	& \textbf{Definition}\\
\hline
$\mathcal{C}_{u,v}$		&  the set containing all Boolean functions mapping $u$ bits to $v$ bits \\
$\Phi$		& the empty set \\
$S_f(\cdot)$ & the Walsh transform of $f$\\
$Enc_k^t$&  a $t$-round block cipher\\
\vspace{0.1cm}
$Enc_k^t[j]$& the $j$-th component function of $Enc_k^t$\\
\vspace{0.1cm}
$S/N$& the ratio of signal to noise\\
$(\Delta x,\Delta y)$& a differential\\
\vspace{0.1cm}
$(\overline{\Delta} x,\overline{\Delta} y)$& a truncated differential\\
\hline
\end{tabular}
\end{table}
\subsection{Differential}
Throughout this paper, $Enc_k:\mathbb{F}_2^n\rightarrow\mathbb{F}_2^n$ denotes a block cipher, where $k\in\mathbb{F}_2^m$ is the master key. $Enc_k^{r}=Enc_{k_r}\circ Enc_{k_{r-1}}\circ\cdots\circ Enc_{k_1}$ denotes the $r$-round iteration of $Enc_k$. Here $k_1,\cdots,k_r$ denote the round keys generated from $k$ according to the key scheduling.   

Suppose $x$ and $x'$ are two plaintexts, and $Enc_k^r(x)=y$, $Enc_k^r(x')=y'$. We call $\Delta y=y'\oplus y$ an output difference and $\Delta x=x'\oplus x$ an input difference. $(\Delta x,\Delta y)$ is defined as a differential of $Enc_k^r$. The probability of differential $(\Delta x,\Delta y)$ is defined as
\begin{align*}
\Pr_{ x\leftarrow\mathbb{F}_2^n}[\Delta x\overset{Enc_k^r}{\rightarrow}\Delta y]&=\Pr_{x\leftarrow\mathbb{F}_2^n}[Enc_k^r(x\oplus \Delta x)\oplus Enc_k^{r}(x)=\Delta y]\\
&=\frac{1}{2^{n}}|\{x\in\mathbb{F}_2^{n}|Enc_k^{r}(x\oplus \Delta x)\oplus Enc_k^{r}(x)=\Delta y\}|.
\end{align*}
If this value is equal to $p$, we call $(\Delta x,\Delta y)$ a $p$-probability differential of $Enc_k^r$.

Differential attack was proposed in 1991 and is one of the most commonly used cryptanalysis methods \cite{ref-journal14}. It utilizes the existence of the differentials with high probability to break block ciphers. Let $Enc_k^{t}=Enc_{k_{t}}\circ Enc_{k_{t-1}}\circ\cdots\circ Enc_{k_1}$ be the $t$-round iteration of $Enc_k$, where $1<t<r$. Namely, $Enc_k^{t}$ is a reduced cipher of $Enc_k^r$. In differential attacks the adversaries first search for a differential of $Enc_k^{t}$ having high probability, then use this differential to screen out the right subkey involved in the last $r-t$ rounds of $Enc_k^r$.

There have been some variants of differential cryptanalysis proposed, including impossible differential attack \cite{ref-proceeding19}, truncated differential attack \cite{ref-proceeding20} and boomerang attack\cite{ref-proceeding21}. These attacks all utilize the statistical no-random properties of the ciphertext differences when specifying the plaintext differences. 

Inspired by the concept of differential of block ciphers, we also define differential for Boolean functions. Let $\mathcal{C}_{u,v}$ be the set that contains all Boolean functions mapping $u$ bits to $v$ bits, where $u,v$ are arbitrary positive integers. For any $f\in\mathcal{C}_{u,v}$, any $x,x'\in\mathbb{F}_2^u$, let $f(x)=y\in\mathbb{F}_2^v$, $f(x')=y'\in\mathbb{F}_2^v$. We call $\Delta y=y'\oplus y$ an output difference and $\Delta x=x'\oplus x$ an input difference. $(\Delta x,\Delta y)$ is defined as a differential of $f$. The probability of differential $(\Delta x,\Delta y)$ is defined as
\begin{align*}
\Pr_{x\leftarrow\mathbb{F}_2^u}[\Delta x\overset{f}{\rightarrow}\Delta y]&=\Pr_{x\leftarrow\mathbb{F}_2^u}[f(x\oplus \Delta x)\oplus f(x)=\Delta y].
\end{align*}
If this value is equal to $p$, we call $(\Delta x,\Delta y)$ a $p$-probability differential of $f$. Especially, for any function $f$ in $\mathcal{C}_{u,1}$, define two sets
\begin{align*}
D_f^0&=\{\Delta x\in\mathbb{F}_2^u|f(x)\oplus f(x\oplus\Delta x)=0,\forall x\in\mathbb{F}_2^u\},\\
D_f^1&=\{\Delta x\in\mathbb{F}_2^u|f(x)\oplus f(x\oplus\Delta x)=1,\forall x\in\mathbb{F}_2^u\}.
\end{align*}
Let $D_f=D_f^0\cup D_f^1$. The vectors in $D_f$ are called complete differentials of $f$. For any vector $\Delta x\in D_f^i$ ($i\in\{0,1\}$), $(\Delta x,i)$ is obviously a 1-probability differential. For any $\Delta x\notin D_f^i$, $\Delta x$ can not form a 1-probability differential of $f$ as an input difference.

\subsection{Quantum Computing}
For any $f\in\mathcal{C}_{u,v}$, the quantum circuit realizes $f$ is equivalent to realizing the following operator
$$
U_f:\sum_{x,y}|x\rangle|y\rangle\rightarrow\sum_{x,y}|x\rangle|y\oplus f(x)\rangle.
$$
Any block cipher $Enc^r$  can
be realized efficiently through a quantum circuit. Namely, there is a quantum circuit with polynomial-complexity taking a state of plaintexts and master keys as input, and outputting the corresponding ciphertexts, realizing the unitary operator
$$
U_{Enc^r}:\sum_{k\in\mathbb{F}_2^m \atop x,y\in\mathbb{F}_2^n}|k\rangle|x\rangle|y\rangle\rightarrow\sum_{k\in\mathbb{F}_2^m \atop x,y\in\mathbb{F}_2^n}|k\rangle|x\rangle|y\oplus Enc_k^r(x)\rangle.
$$
All quantum circuits can be realized using only the gates in some universal gate set \cite{ref-book1}, such as $\{\frac{\pi}{8},CNOT,Phase,Hadamard\}$. Thus $Enc^r$ can be realized using a quantum circuit containing only polynomial universal gates. Let the total amount of quantum universal gates in this circuit be $|Enc^r|_Q$. $U_{Enc^r}$ can be integrated into the quantum circuits as presented in Figure 1.

\begin{figure}[H]
\centering
\includegraphics[width=7cm]{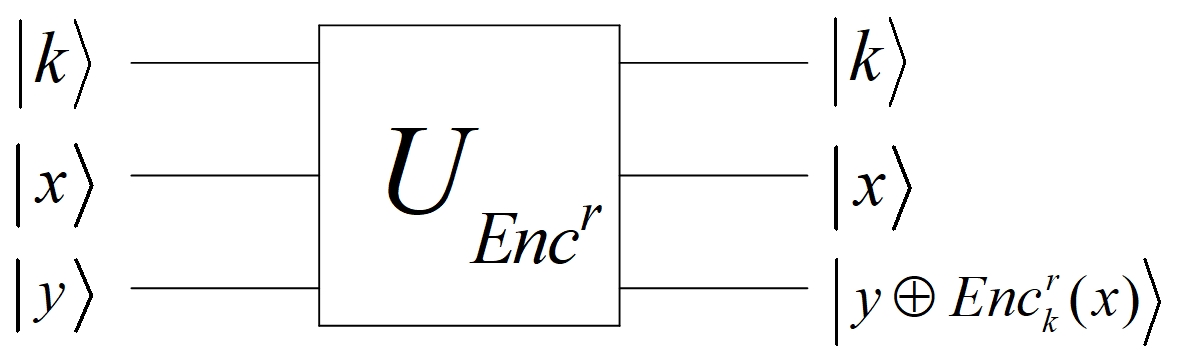}
\caption{Quantum gate $U_{Enc^r}$\label{Fig1}}
\end{figure}

There are two models proposed to describe quantum adversaries, named $Q_1$ and $Q_2$ model respectively \cite{ref-proceeding7,ref-proceeding8,ref-proceeding9}. The $Q_1$ adversaries can perform local quantum operations, but can merely make queries classically to the cryptography primitives. The $Q_2$ adversaries, in addition to classical queries and local quantum operations, can also query quantum oracles of the cryptography primitives.
$Q_2$ model is more
demanding since it is difficult to achieve quantum oracles of cryptography primitives in practice. 

\subsection{Bernstein-Vazirani Algorithm}
Bernstein-Vazirani (BV) algorithm \cite{ref-journal10} was designed for solving the problem: $s\in\{0,1\}^n$ being a secret vector, with a quantum circuit of function $f(x)=s\cdot x$, how to get the value of $s$. BV algorithm runs as follows:
\begin{enumerate}
\item Implement Hadamard transform $H^{(n+1)}$ on $|\psi_0\rangle=|0\rangle^{\otimes n}|1\rangle$, giving 
$$|\psi_1\rangle=\sum_{x\in \mathbb{F}_2^n}\frac{|x\rangle}{\sqrt{2^n}}\cdot\frac{|0\rangle-|1\rangle}{\sqrt{2}}.
$$

\item Using the quantum circuit of $f$ to get
$$|\psi_2\rangle=\sum_{x\in \mathbb{F}_2^n}\frac{(-1)^{f(x)}|x\rangle}{\sqrt{2^n}}\frac{|0\rangle-|1\rangle}{\sqrt{2}}.
$$
\item Discarding the unentangled last qubit, perform Hadamard operator $H^{(n)}$ on the remaining qubits, getting
\begin{align}
|\psi_3\rangle=\sum_{y\in \mathbb{F}_2^n}(\frac{1}{2^n}\sum_{x\in \mathbb{F}_2^n}(-1)^{f(x)+y\cdot x})|y\rangle.
\end{align}
Since $f(x)=s\cdot x$, we have
\begin{align*}
|\psi_3\rangle&=\sum_{y\in \mathbb{F}_2^n}(\frac{1}{2^n}\sum_{x\in \mathbb{F}_2^n}(-1)^{(s\oplus y)\cdot x})|y\rangle=|s\rangle.
\end{align*}
Therefore, measuring $|\psi_3\rangle$ gives the value of $s$.
\end{enumerate}

For any function $f:\mathbb{F}_2^n\rightarrow\mathbb{F}_2$ in $\mathcal{C}_{n,1}$, its Walsh transform is defined as the function
\begin{align*}
S_f:\mathbb{F}_2^n&\longrightarrow\mathbb{F}_2\\
u&\longrightarrow S_f(u)=\frac{1}{2^n}\sum_{x\in \mathbb{F}_2^n}(-1)^{f(x)+u\cdot x}.
\end{align*}
Eq.(1) shows that, if BV algorithm is run on a general function $f\in\mathcal{C}_{n,1}$, the final quantum state without measured will be 
$$\sum_{y\in \mathbb{F}_2^n}S_f(y)|y\rangle,$$ 
where $S_f(\cdot)$ denotes the Walsh transform of $f$. When this state is measured, the probability of any $y\in\mathbb{F}_2^n$ being output is $S_f(y)^2$. Thus BV algorithm run on $f$ must output $y$ such that $S_f(y)\neq0$. 

Figure 3 shows the circuit of BV algorithm. BV algorithm needs totally $2n+1+|f|_Q$ universal gates. The corresponding quantum circuit needs $n+1$ qubits. 

\begin{figure}[H]
\centering
\includegraphics[width=10cm]{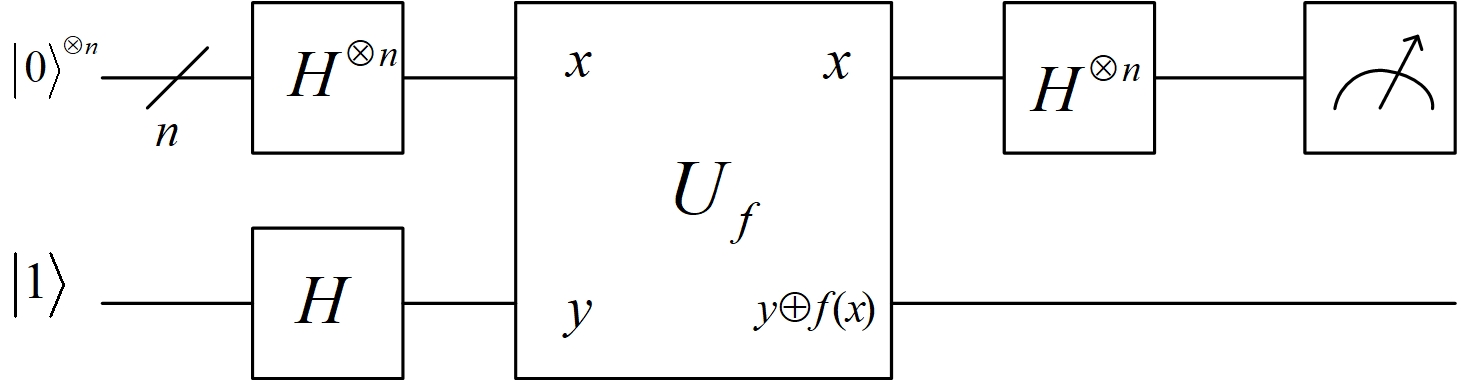}
\caption{Quantum circuit of BV algorithm\label{Fig2}}
\end{figure}

Utilizing the fact that BV algorithm always outputs the vectors in the support set of the Walsh transform, Li et al. showed a quantum algorithm that is used for finding differentials having high probability \cite{ref-journal15}.

\begin{algorithm}[H]
\caption*{{\large \textbf{Algorithm 1}}}
\hangafter 1
\hangindent 4.1em
\textbf{Input$\,$}: $\,\,$the quantum circuit realizing $f\in\mathcal{C}_{n,1}$, a polynomial $q(n)$ of $n$.

\hangafter 1
\hangindent 4.7em
\textbf{Output}: a differential of function $f$.

\begin{algorithmic}[1]
\State \parbox[t]{\dimexpr\textwidth-\leftmargin-\labelsep-\labelwidth}{Define a set $W:=\Phi$;\strut}
    \For{$l=1,2,\cdots,q(n)$}
    \State \parbox[t]{\dimexpr\textwidth-\leftmargin-\labelsep-\labelwidth}{Apply BV algorithm to $f$, obtaining an vector $u$ with $n$ bits such that $S_f(u)\neq0$;\strut}

       \State \parbox[t]{\dimexpr\textwidth-\leftmargin-\labelsep-\labelwidth}{Let $W=W\cup \{u\}$;\strut}
\EndFor
\State \parbox[t]{\dimexpr\textwidth-\leftmargin-\labelsep-\labelwidth}{Solve the equation $\{x\cdot u=i|u\in W\}$, getting the solution set $Z^i$ for both $i=0,1$;\strut}
\If{ $Z^0\cup Z^1\subseteq\{\vec{0}\}$}  \State \parbox[t]{\dimexpr\textwidth-\leftmargin-\labelsep-\labelwidth}{Output ``No'';\strut}
 \Else \State \parbox[t]{\dimexpr\textwidth-\leftmargin-\labelsep-\labelwidth}{Output $Z^0$ and $Z^1$;\strut}
 \EndIf
\end{algorithmic}
\end{algorithm}

For any $f\in\mathcal{C}_{n,1}$, let
\begin{align*}
\gamma_f&=\frac{1}{2^n} \max_{\substack{\Delta x\in \mathbb{F}_2^n\\ \Delta x\notin D_f}}\max_{i\in \{0,1\}}|\{x\in \mathbb{F}_2^n|f(x\oplus \Delta x)\oplus f(x)=i\}|\\
&=\max_{\substack{\Delta x\in \mathbb{F}_2^n\\ \Delta x\notin D_f}}\max_{i\in \{0,1\}}\Pr_{x\leftarrow\mathbb{F}_2^n}[\Delta x\overset{f}{\rightarrow}i].
\end{align*}
It is easy to derive that $\gamma_f<1$. This parameter is the maximum probability of differentials of $f$ which are not 1-probability differentials. The authors in \cite{ref-journal11,ref-journal15} has proven the following theorems, which illustrate the soundness of the above algorithm.
\begin{Theorem}[\cite{ref-journal15}] If Algorithm 1 outputs two sets $Z^0$ and $Z^1$ when applied to a function $f\in\mathcal{C}_{n,1}$, then for any $\Delta x\in Z^i$ ($i=0,1$), any $\epsilon$ satisfying $0<\epsilon<1$, it holds that 
\begin{equation}
{\rm Pr}\Big[1-\frac{|\{x\in \mathbb{F}_2^n|f(x\oplus \Delta x)+f(x)=i\}|}{2^n}<\epsilon\Big]>1-e^{-2q(n)\epsilon^2}.
\end{equation}
\end{Theorem}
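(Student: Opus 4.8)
The plan is to reduce the theorem to a concentration inequality applied to the $q(n)$ independent outputs of the BV subroutine, after first pinning down the distribution of a single BV output relative to the candidate difference $\Delta x$. Throughout, fix a vector $\Delta x\in\mathbb{F}_2^n$ and a bit $i\in\{0,1\}$, and write $p_i=\frac{1}{2^n}|\{x\in\mathbb{F}_2^n\mid f(x\oplus\Delta x)\oplus f(x)=i\}|$ for the differential probability in question. The probability in the statement is over the internal randomness of Algorithm 1, i.e.\ over the $q(n)$ measurement outcomes $u_1,\dots,u_{q(n)}$ produced by the repeated BV runs, and the event ``$\Delta x\in Z^i$'' is the relevant acceptance event.

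First I would identify the sampling law of a single BV run. As recalled after Eq.~(1), one run of BV on $f$ returns the vector $u$ with probability $S_f(u)^2$, so the $u_l$ are i.i.d.\ with this law. The decisive computation is to evaluate $\Pr[u\cdot\Delta x=i]=\sum_{u:\,u\cdot\Delta x=i}S_f(u)^2$. Writing the indicator of $u\cdot\Delta x=i$ as $\tfrac12\bigl(1+(-1)^{i}(-1)^{u\cdot\Delta x}\bigr)$, this sum splits into $\tfrac12\sum_u S_f(u)^2$ and $\tfrac{(-1)^i}{2}\sum_u S_f(u)^2(-1)^{u\cdot\Delta x}$. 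Parseval gives $\sum_u S_f(u)^2=1$, while expanding $S_f(u)^2$ and using the character identity $\sum_u(-1)^{u\cdot(x\oplus x'\oplus\Delta x)}=2^n\delta_{x'=x\oplus\Delta x}$ collapses the second sum to the autocorrelation $\frac{1}{2^n}\sum_x(-1)^{f(x)\oplus f(x\oplus\Delta x)}=p_0-p_1$. Since $p_0+p_1=1$, this yields the clean identity $\Pr[u\cdot\Delta x=i]=p_i$.

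Next I would package the algorithm's acceptance condition. By construction $\Delta x\in Z^i$ holds exactly when $\Delta x\cdot u_l=i$ for \emph{every} $l=1,\dots,q(n)$. Setting $X_l=\mathbb{1}[u_l\cdot\Delta x=i]$, the $X_l$ are i.i.d.\ Bernoulli$(p_i)$ by the identity above, and the event $\Delta x\in Z^i$ is precisely $\frac{1}{q(n)}\sum_{l}X_l=1$. Thus a difference $\Delta x$ with small differential probability, say $1-p_i\ge\epsilon$, can be accepted only if the empirical mean $\frac{1}{q(n)}\sum_l X_l$ overshoots its expectation $p_i$ by at least $\epsilon$.

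Finally I would invoke Hoeffding's inequality for the bounded i.i.d.\ variables $X_l\in\{0,1\}$, namely $\Pr\bigl[\frac{1}{q(n)}\sum_l X_l-p_i\ge\epsilon\bigr]\le e^{-2q(n)\epsilon^2}$. Hence the probability that a difference with $1-p_i\ge\epsilon$ survives all $q(n)$ constraints into $Z^i$ is at most $e^{-2q(n)\epsilon^2}$, which is exactly the complement of the claimed bound $\Pr[1-p_i<\epsilon]>1-e^{-2q(n)\epsilon^2}$. I expect the main obstacle to be the first step, the Fourier/Parseval computation establishing $\Pr[u\cdot\Delta x=i]=p_i$, since it is what links the Walsh-support sampling of BV to the differential probability and is the crux on which the otherwise routine Hoeffding argument rests; the remaining care is purely in matching the slightly informal ``for any $\Delta x\in Z^i$'' phrasing to this concentration bound.
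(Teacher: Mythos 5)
Your proof is correct and follows essentially the same route as the proof in \cite{ref-journal15}, on which this paper relies without reproving the statement: a single BV run samples $u$ with probability $S_f(u)^2$, the Parseval/autocorrelation computation gives $\sum_{u:\,u\cdot\Delta x=i}S_f(u)^2=p_i$, and a Hoeffding bound over the $q(n)$ i.i.d.\ runs shows that any $\Delta x$ with $1-p_i\geq\epsilon$ survives all constraints into $Z^i$ with probability at most $e^{-2q(n)\epsilon^2}$. Your identification of the crux (linking the Walsh-support sampling to the differential probability) and your reading of the informal ``for any $\Delta x\in Z^i$'' phrasing both match the intended argument, so nothing further is needed.
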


\begin{Theorem}[\cite{ref-journal11}] Suppose $f\in\mathcal{C}_{n,1}$ and there is a constant $a_0$ such that $\gamma_f\leq a_0<1$. If Algorithm 1 outputs two sets $Z^0$ and $Z^1$ when applied to $f$ with $q(n)=n$, then for any vector $\Delta x\notin D_f^i$ ($i=0,1$), we have
$$
\Pr[\Delta x\in Z^i]\leq a_0^{n}.
$$
\end{Theorem}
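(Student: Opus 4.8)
The plan is to reduce the statement to a single-run probability computation and then exploit the independence of the $n$ invocations of BV algorithm. Recall from the discussion preceding the theorem that one run of BV algorithm on $f$ returns a vector $u$ with probability $S_f(u)^2$. The crux is to determine, for a fixed input difference $\Delta x$, the probability that such a random $u$ satisfies $u\cdot\Delta x=i$. First I would compute the signed sum $\sum_{u}S_f(u)^2(-1)^{u\cdot\Delta x}$ by expanding the definition of the Walsh transform $S_f$, interchanging the order of summation, and invoking the orthogonality relation that $\sum_u(-1)^{u\cdot w}$ equals $2^n$ when $w=\vec 0$ and vanishes otherwise. This collapses the resulting double sum over $x,y$ onto the diagonal $y=x\oplus\Delta x$ and yields the autocorrelation $\frac{1}{2^n}\sum_x(-1)^{f(x)\oplus f(x\oplus\Delta x)}$, which equals $p_0-p_1$, where I write $p_i:=\Pr_{x\leftarrow\mathbb{F}_2^n}[\Delta x\overset{f}{\rightarrow}i]$.

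Combining this identity with Parseval's relation $\sum_u S_f(u)^2=1$, I would then read off, for each $i\in\{0,1\}$,
\[
\Pr[u\cdot\Delta x=i]=\sum_{u:\,u\cdot\Delta x=i}S_f(u)^2=\frac{1+(-1)^i\,(p_0-p_1)}{2}=p_i .
\]
In other words, the probability that a BV output is $i$-orthogonal to $\Delta x$ is exactly the differential probability $p_i$. This is the conceptual heart of the argument, since it translates the abstract sampling distribution of BV algorithm into the concrete differential quantity that the theorem seeks to control.

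Next I would observe that the event $\Delta x\in Z^i$ is precisely the event that every $u\in W$ satisfies $u\cdot\Delta x=i$. Since $W$ is assembled from $q(n)=n$ mutually independent runs of BV algorithm, this event is the intersection of $n$ independent events, each of probability $p_i$ by the computation above. Independence therefore gives $\Pr[\Delta x\in Z^i]=p_i^{\,n}$, with the probability taken over the internal randomness of the $n$ BV invocations.

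Finally I would dispatch the hypothesis $\Delta x\notin D_f^i$ by a short case split. Because $D_f^0$ and $D_f^1$ are disjoint, either $\Delta x\in D_f^{1-i}$, in which case $p_i=0$ and the bound is immediate, or $\Delta x\notin D_f$ altogether, in which case the definition of $\gamma_f$ yields $p_i\le\gamma_f\le a_0$. In both situations $\Pr[\Delta x\in Z^i]=p_i^{\,n}\le a_0^{\,n}$, which is the claimed inequality. I expect the autocorrelation identity in the first step to be the only genuinely technical point; once $\Pr[u\cdot\Delta x=i]=p_i$ is established, the independence argument and the case analysis are routine.
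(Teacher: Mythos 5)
Your proof is correct and takes essentially the same route as the proof of this theorem in the cited source \cite{ref-journal11} (the present paper states the result without an inline proof): one establishes, via the autocorrelation identity $\sum_{u}S_f(u)^2(-1)^{u\cdot\Delta x}=p_0-p_1$ together with Parseval's relation, that a single BV run outputs $u$ with $u\cdot\Delta x=i$ with probability exactly $p_i=\Pr_{x\leftarrow\mathbb{F}_2^n}[\Delta x\overset{f}{\rightarrow}i]$, and then uses the independence of the $n$ runs and the bound $p_i\leq\gamma_f\leq a_0$ (or $p_i=0$ when $\Delta x\in D_f^{1-i}$) to conclude $\Pr[\Delta x\in Z^i]=p_i^{\,n}\leq a_0^{\,n}$. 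Your case split on $\Delta x\in D_f^{1-i}$ versus $\Delta x\notin D_f$ correctly handles the hypothesis $\Delta x\notin D_f^i$, and I see no gaps.
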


Theorem 1 demonstrates that, for any vector $\Delta x\in Z^i$ ($i=0,1$), the differential probability of $(\Delta x, i)$ to $f$ is greater than $1-\epsilon$ with a probability greater $1-e^{-2q(n)\epsilon^2}$.

\section{Quantum Truncated Differential Attack}

Knudsen introduced truncated differential attack in 1994 \cite{ref-proceeding20}. This cryptanalytic method has been widely applied to attack symmetric ciphers \cite{ref-proceeding22,ref-proceeding23}. In initial version of differential attacks, the adversaries utilize full differences of plaintexts and ciphertexts, while in truncated differential attacks the adversaries consider differences which are just partially determined. The adversaries only predict some bits of the differentials rather than the entire differentials.

We still consider the block cipher $Enc_k^r$ with key space $\mathbb{F}_2^m$. A truncated differential $(\overline{\Delta} x,\overline{\Delta} y)$ of $Enc_k^r$ is a pair of vectors satisfying $\overline{\Delta} x, \overline{\Delta} y\in\{*,0,1\}^n$. where $*$ denotes an undetermined bit. Let $\overline{\Delta} x=(\overline{\Delta}x_1,\cdots,\overline{\Delta}x_n)$, $\overline{\Delta} y=(\overline{\Delta}y_1,\cdots,\overline{\Delta}y_n)$. then $\overline{\Delta}x_i, \overline{\Delta}y_i\in\{*,0,1\}$. The bits of $\overline{\Delta} x$ ($\overline{\Delta} y$) taking a value of 0 or 1 are defined as predicted bits, while the bits with a value of $*$ are defined as unpredicted bits. 

A truncated difference is equivalent to a set of complete differences. Define 
$$
\Omega_{\overline{\Delta}x}=\big\{\Delta x=(\Delta x_1,\cdots,\Delta x_n)\in\mathbb{F}_2^n\big|\Delta x_i=\overline{\Delta}x_i \text{ if } \overline{\Delta}x_i\neq *, i\in\{1,2,\cdots,n\}\big\},
$$
$$
\Omega_{\overline{\Delta}y}=\Big\{\Delta y=(\Delta y_1,\cdots,\Delta y_n)\in\mathbb{F}_2^n\Big|\Delta y_i=\overline{\Delta}y_i \text{ if } \overline{\Delta}y_i\neq *, i\in\{1,2,\cdots,n\}\Big\},
$$
then truncated differences $\overline{\Delta}x$ and $\overline{\Delta} y$ are equivalent to $\Omega_{\overline{\Delta}x}$ and $\Omega_{\overline{\Delta}y}$ respectively. If a complete input difference $\Delta x$ is in $\Omega_{\overline{\Delta}x}$, that is, $\Delta x_j=\overline{\Delta}x_j$ for all $j\in\{1,\cdots,n\}$ such that $\overline{\Delta}x_j\neq *$, we say that $\Delta x$ matches the truncated difference $\overline{\Delta}x$, and denote this case as $\Delta x \sim \overline{\Delta}x$. Likewise, $\Delta y \sim \overline{\Delta}y$ means that $\Delta y$ matches the truncated difference $\overline{\Delta}y$. 

The conditional probability
\begin{align*}
\Pr_{x\leftarrow\mathbb{F}_2^n}[\overline{\Delta}x\overset{Enc_k^{r}}{\rightarrow}\overline{\Delta}y]&=\Pr_{x\leftarrow\mathbb{F}_2^n}[Enc_k^{r}(x\oplus \Delta x)\oplus Enc_k^{r}(x)\sim \overline{\Delta}y | \Delta x\sim\overline{\Delta}x ]\\
&=\Pr_{x\leftarrow\mathbb{F}_2^n}[Enc_k^{r}(x\oplus \Delta x)\oplus Enc_k^{r}(x)\in \Omega_{\overline{\Delta}y}|\Delta x\in\Omega_{\overline{\Delta}x}]
\end{align*}
is defined as the probability of $(\overline{\Delta}x,\overline{\Delta}y)$. If $p$ is equal to the probability of $(\overline{\Delta}x,\overline{\Delta}y)$, we call $(\overline{\Delta}x,\overline{\Delta}y)$ a $p$-probability truncated differential of $Enc_k^r$.

Let $Enc^t$ ($1<t<r$) be a reduced cipher of $Enc^r$. In a truncated differential attack, the adversaries first search for a truncated differential of $Enc^t$ which has high probability, then use this truncated differential, denoted as $(\overline{\Delta}x,\overline{\Delta}y)$ to recover the subkeys involved in the last $r-t$ rounds. In detail, the adversaries fix the plaintext difference $\overline{\Delta} x$, then use $2M$ pairs of plaintexts, whose differences match $\overline{\Delta} x$, to make encryption queries and get $2M$ pairs of corresponding ciphertexts. Subsequently, for each possible candidate subkey of the last $r-t$ rounds, the adversaries use it to decrypt $r-t$ rounds to obtain $M$ output differences of $Enc_k^t$, in the meantime calculate the amount of the differences that match $\overline{\Delta}y$. Finally, the right subkey is the subkey having the maximum count.

The amount of plaintext pairs needed in a such counting scheme and the success probability of getting right key are determined by the ratio of signal to noise \cite{ref-journal14}, whose definition is:
$$
S/N=\frac{L\times p}{\alpha\times \lambda}.
$$
Here $L$ denotes the total amount of possible subkeys involved in the last $r-t$ rounds, $p$ denotes the probability of $(\overline{\Delta}x,\overline{\Delta}y)$, $\alpha$ denotes the average count that every plaintext pair contributes and $\lambda$ denotes the proportion of pairs not discarded in the preprocessing procedure. We don't consider any pre-discarding process here, therefore set $\lambda=1$. A truncated differential attack can succeed only when $S/N>1$. Thus the adversaries should use a truncated differential that makes the ratio of signal to noise greater than 1. The greater $S/N$ is, the easier it is to singled out the right subkey. 

In the following we propose a quantum algorithm which is used for finding truncated differentials. In classical truncated differential attack, because the adversaries do not know the value of $k$ of the reduced cipher $Enc_k^t$, they actually have to find a truncated differential whose probability is high no matter what value the key $k$ takes. Due to this, our quantum algorithm is designed to search truncated differentials having high probability for a large proportion of keys in $\mathbb{F}_2^m$. Specifically, by choosing an polynomial $\tau(n)$, the adversaries can force our quantum algorithm only output truncated differentials that, for more than $(1-\frac{1}{\tau(n)})$ proportion of keys in $\mathbb{F}_2^m$, has high probability. We first present the algorithm and than analyze its effectiveness and complexity.

\subsection{Finding Truncated Differentials via BV algorithm}

Given a reduced block cipher $Enc_k^t$, let $Enc^t_{k}(x)=(Enc^t_{k}[1](x),\cdots,$ $Enc^t_{k}[n](x))$. That is, $Enc_k^t[j]$ denotes the $j$-th component function of $Enc_k^t$. An intuitive way to find high-probability truncated differentials is to implement Algorithm 1 on every $Enc_k^t[j]$. If Algorithm 1 finds differentials of several component functions that all have high probability and have a common input difference, then we can derive a truncated differential of $Enc_k^t$ having high probability. However, running Algorithm 1 to $Enc_k^t[j]$ needs quantum queries of $Enc_k^t$. It is impossible to achieve even under $Q_2$ model since $Enc_k^t$ is a reduced cipher instead of the complete cipher $Enc_k^r$. In original differential attack, the adversaries are also not able to query the reduced version. They thus analyze the detailed constructions of the cipher and search for truncated differentials whose probability are high no matter what value the key takes. Inspired by this idea, we consider searching for the truncated differentials with high probability for most keys.

Since all constructions of the cipher $Enc_k^t$, except for the private key $k$, are public, the function
\begin{align*}
Enc^t:\{0,1\}^{n}\times\{0,1\}^m&\longrightarrow \{0,1\}^n\\
(\,\,\,x\,\,\,\,\,\,,\,\,\,\, k\,\,\,)\quad&\longrightarrow Enc^t_{k}(x)
\end{align*}
take the key as input and is known and determined to the adversaries. Thus the adversaries have access to the quantum circuit of the unitary operator $$U_{Enc^t}:|x\rangle|k\rangle|y\rangle\rightarrow|x\rangle|k\rangle|y\oplus Enc^t(x,k)\rangle=|x\rangle|k\rangle|y\oplus Enc^t_k(x)\rangle.$$
Let $|Enc^t|_Q$ be the amount of quantum universal gates of this circuit. The adversaries also have the quantum circuit of every component function 
\begin{align*}
Enc^t[j]:\{0,1\}^{n}\times\{0,1\}^m&\longrightarrow \{0,1\}^n\\
(\,\,\,x\,\,\,\,\,\,,\,\,\,\, k\,\,\,)\quad&\longrightarrow Enc^t_{k}[j](x).
\end{align*}
The corresponding amount of gates is $|Enc^t[j]|_Q$ ($j=1,\cdots,n$). The adversaries have the quantum circuits of $Enc^t[j]$'s, so they can run Algorithm 1 on $Enc^t[j]$'s without quantum queries. The adversaries can run Algorithm 1 to get the differentials of high probability of every $Enc^t[j]$, then by taking a common input difference of part component functions as the input difference, they can obtain a truncated differential having high probability.
According to this idea, we propose Algorithm 2 that finds truncated differentials of block ciphers.

\begin{algorithm}[H]
\caption*{{\large \textbf{Algorithm 2}}}
\hangafter 1
\hangindent 4.1em
\textbf{Input$\,$}: $\,\,$The quantum circuit of $Enc^t$, a polynomial $\tau(n)$ and a constant $\sigma$ ($0<\sigma<1$) chosen by the adversaries.

\hangafter 1
\hangindent 4.7em
\textbf{Output}: a high-probability truncated differential of $Enc^t$.

\begin{algorithmic}[1]
\State \parbox[t]{\dimexpr\textwidth-\leftmargin-\labelsep-\labelwidth}{Let $q(n)=\frac{1}{2(1-\sigma)^2}\tau(n)^2n^3$;\strut} 
\State \parbox[t]{\dimexpr\textwidth-\leftmargin-\labelsep-\labelwidth}{Define a set $W:=\Phi$;\strut}
\For{$j=1,2,\cdots,n$}
 \For{$l=1,\cdots,q(n)$}
  \State \parbox[t]{\dimexpr\textwidth-\leftmargin-\labelsep-\labelwidth}{Apply BV algorithm to $Enc^t[j]$ to get an output $u=(u_1,\cdots,u_n,u_{n+1},\cdots,u_{n+m})$;\strut} 
  \State \parbox[t]{\dimexpr\textwidth-\leftmargin-\labelsep-\labelwidth}{let $W=W\cup\{(u_1,\cdots,u_n)\}$;\strut} 
 \EndFor
 \State \parbox[t]{\dimexpr\textwidth-\leftmargin-\labelsep-\labelwidth}{Solve the equation $\{x\cdot u=i_j|u\in W\}$, getting the solution sets $Z_j^{i_j}$ for $i_j=0,1$ respectively;\strut} 
 \State \parbox[t]{\dimexpr\textwidth-\leftmargin-\labelsep-\labelwidth}{ Let $Z_j=Z_j^0\cup Z_j^1$; \strut} 
 \State \parbox[t]{\dimexpr\textwidth-\leftmargin-\labelsep-\labelwidth}{Let $\overline{Z}_j=\{(a,ij)\in\mathbb{F}_2^n\times\mathbb{F}_2|a\in Z_j^{i_j},=0,1\}$; \strut} 
 \State \parbox[t]{\dimexpr\textwidth-\leftmargin-\labelsep-\labelwidth}{Let $W=\Phi$;\strut} 
\EndFor
\For{$d=1,2,\cdots,n$}
 \If{$S/N=2^d\sigma>1$}
 \If{there are $d$ different subscripts $j_1,\cdots,j_d$ s.t. $Z_{j_1}\cap\cdots\cap Z_{j_d}\supsetneq\{\vec{0}\}$}
 \State \parbox[t]{\dimexpr\textwidth-\leftmargin-\labelsep-\labelwidth}{Choose at random a vector $a\in Z_{j_1}\cap\cdots\cap Z_{j_d}$, and for $j=1,\cdots,n$, let
 $$
b_j=\left\{\begin{array}{cc}
i_j, & \,\,j\in\{j_1,\cdots,j_d\}\\
*, &  \,\,j\notin\{j_1,\cdots,j_d\},\\
\end{array}
\right.
$$\strut}
\algstore{bkbreak}
\end{algorithmic}
\end{algorithm}

\begin{algorithm}
\begin{algorithmic}[1]
\algrestore{bkbreak}
\State \parbox[t]{\dimexpr\textwidth-\leftmargin-\labelsep-\labelwidth}{where $i_j$ denotes the bit 
appended to $a$ in the set $\overline{Z}_j$, i.e., $(a,i_j)\in \overline{Z}_j$;\strut} 
\State \parbox[t]{\dimexpr\textwidth-\leftmargin-\labelsep-\labelwidth}{ Let $b=(b_1,\cdots,b_n)$ and return $(a,b)$;\strut} 
 \EndIf
 \EndIf
\EndFor
\State \parbox[t]{\dimexpr\textwidth-\leftmargin-\labelsep-\labelwidth}{Return ``No'';\strut} 
\end{algorithmic}
\end{algorithm}

Steps 1-12 of Algorithm 2 are for finding high-probability differentials of $Enc^t[j]$ for every $j=1,\cdots,n$. The purpose of steps 13-20 is to choose a difference which is a common input difference of as many $Enc^t[j]$ as possible. Algorithm 2 outputs a truncated differential $(a,b)$ of $Enc^t$. The symbol ``$*$'' in $b$ means the corresponding bits are unpredicted. In a quantum truncated differential attack, the adversaries first choose a polynomial $\tau(n)$ and a constant $\sigma$ ($0<\sigma<1$), then implement Algorithm 2 to get an output $(a,b)$. According to Theorem 3 which will be proved in section 3.2, the differential probability of $(a,b)$ is greater than $\sigma$ for more than $(1-\frac{1}{\tau(n)})$ proportion of keys in $\mathbb{F}_2^m$ with a overwhelming probability.

To demonstrate the feasibility of the output truncated differential $(a,b)$, it is necessary to compute the ratio of signal to noise $S/N$. To this end, we first calculate the parameter $\alpha$, which is equal to the average count that every plaintext pair contributes. There are $d$ bits of the difference $b$ predicted, therefore a total of $2^{n-d}$ output differences matching the truncated difference $b$. In counting process, the ciphertexts of a fixed pair of plaintexts are decrypted by $L$ candidate subkeys. The resulting $L$ output differences can be viewed as random vectors. Therefore, every plaintext pair contributes $$\alpha=\frac{2^{n-d}}{2^n}\times L=\frac{L}{2^d}$$ counts on average. Then  
$$
N/S\geq\frac{L\times\sigma}{\frac{L}{2^d}\times1}=2^d\sigma>1.
$$
This value is greater than 1 because of the condition $2^d\sigma>1$ in the step 14 of Algorithm 2. After getting the output $(a,b)$, the adversaries can utilize it to find the right subkey involved in the last $r-t$ rounds just like the traditional truncated differential attack. This attack should work for at least $(1-\frac{1}{\tau(n)})$ proportion of keys in $\mathbb{F}_2^m$. Even if ``No'' is output, the adversaries can adjust the polynomial $\tau(n)$ and $\sigma$ to increase the success probability.

\subsection{Analysis}

We analyze the correctness and efficiency of Algorithm 2. Theorem 3 indicates the correctness of Algorithm 2. 
\begin{Theorem}Suppose Algorithm 2 outputs $(a,b)$, then with a overwhelming probability, there is a subset $S\subseteq\mathbb{F}_2^m$ satisfying that $|S|/|\mathbb{F}_2^m|>1-\frac{1}{\tau(n)}$, and for every key $k\in S$, 
$$
\frac{|\{x\in \mathbb{F}_2^n|Enc^t_k(x\oplus a)+Enc^t_k(x)\sim b\}|}{2^n}>\sigma.
$$
Namely, the differential probability of $(a,b)$ is greater than $\sigma$ for more than $(1-\frac{1}{\tau(n)})$ proportion of keys in $\mathbb{F}_2^m$.
\end{Theorem}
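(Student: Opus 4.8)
The plan is to reduce the statement to one application of Theorem 1 per output coordinate, by folding the key into the input. The crucial observation is that each component $Enc^t[j]:\mathbb{F}_2^n\times\mathbb{F}_2^m\to\mathbb{F}_2$ is an ordinary function in $\mathcal{C}_{n+m,1}$, and that an $n$-bit vector $a$ lands in $Z_j^{i_j}$ precisely when the full-length difference $(a,\vec 0)$ (zero key-difference) survives all $q(n)$ BV runs on $Enc^t[j]$, since $(a,\vec0)\cdot u=a\cdot(u_1,\dots,u_n)$. First I would record the Walsh--autocorrelation identity $\sum_{u:\,u\cdot\Delta=i}S_g(u)^2=\Pr_{z}[g(z\oplus\Delta)\oplus g(z)=i]$, valid for any $g\in\mathcal{C}_{N,1}$. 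Applying it to $g=Enc^t[j]$, $z=(x,k)$ and $\Delta=(a,\vec0)$ shows that the probability a single BV run returns $u$ with $a\cdot(u_1,\dots,u_n)=i_j$ equals the key-averaged differential probability $\bar p_j:=\frac1{2^m}\sum_k p_j(k)$, where $p_j(k)=\Pr_x[Enc^t_k[j](x\oplus a)\oplus Enc^t_k[j](x)=i_j]$. Hence $\Pr[a\in Z_j^{i_j}]=\bar p_j^{\,q(n)}$, which is exactly the quantity Theorem 1 controls.

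Next I would fix the free parameter of Theorem 1 to $\epsilon=\frac{1-\sigma}{\tau(n)\,n}$. With the algorithm's choice $q(n)=\frac{\tau(n)^2n^3}{2(1-\sigma)^2}$ this gives $2q(n)\epsilon^2=n$, so by Theorem 1 any fixed difference $a$ whose averaged probability is bad, i.e. $\bar p_j\le1-\epsilon$, satisfies $\Pr[a\in Z_j^{i_j}]\le e^{-2q(n)\epsilon^2}=e^{-n}$. Taking a union bound over all $j\in\{1,\dots,n\}$, both values $i\in\{0,1\}$ and all $\le 2^n$ candidate differences bounds the probability of the bad event ``some $Z_j^i$ contains a bad difference'' by $2n\,2^{n}e^{-n}=2n(2/e)^n$, which is exponentially small. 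On the complementary, overwhelming-probability event, every difference occurring in any output set is good; in particular the returned $a$ obeys $\bar p_{j_l}>1-\epsilon$ for each of the $d$ predicted coordinates $j_1,\dots,j_d$.

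It remains to pass from these $d$ key-averaged bounds to a per-key statement about the truncated differential $(a,b)$. Writing $Y(k)=\sum_{l=1}^d\bigl(1-p_{j_l}(k)\bigr)\ge0$, a union bound over coordinates gives $1-P(k)\le Y(k)$, where $P(k)$ is the truncated differential probability of $(a,b)$ under key $k$. Since $\mathbb{E}_k[Y(k)]=\sum_{l=1}^d(1-\bar p_{j_l})<d\epsilon\le n\epsilon=\frac{1-\sigma}{\tau(n)}$, Markov's inequality yields $\Pr_k[Y(k)\ge1-\sigma]<\frac1{\tau(n)}$. Setting $S=\{k:Y(k)<1-\sigma\}$ then gives $|S|/2^m>1-\frac1{\tau(n)}$, and for every $k\in S$ we get $P(k)\ge1-Y(k)>\sigma$, which is exactly the claim.

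I expect the main obstacle to be the dependence of the output $a$ on the random BV samples: because $a$ is selected after sampling, Theorem 1 cannot be invoked for it directly, and one is forced to union bound over all $2^n$ a priori candidate differences. This is precisely why the sample count must carry the extra factor of $n$ (i.e. $q(n)=\Theta(\tau^2n^3/(1-\sigma)^2)$ rather than $n^2$): it pushes the per-difference failure down to $e^{-n}$ so that $2^n e^{-n}=(2/e)^n$ still vanishes. The remaining ingredients---the Walsh identity and the Markov conversion from averaged to per-key probabilities---are routine once this counting is correctly set up.
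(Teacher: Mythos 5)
Your proposal is correct, and its skeleton matches the paper's proof: fold the key into the input so each $Enc^t[j]$ lives in $\mathcal{C}_{n+m,1}$, control the $d$ predicted coordinates via the Theorem~1 concentration, and convert the resulting bound on the key-averaged probability into a per-key statement by Markov's inequality (the paper phrases this last step as $\mathbb{E}_k[Z(k)]>1-d\epsilon$ plus a proof by contradiction, which is exactly your Markov argument applied to $1-Z(k)$; likewise your per-key union bound $1-P(k)\le Y(k)$ replaces the paper's inductive counting over $z\in\mathbb{F}_2^{n+m}$, to the same effect). Where you genuinely depart from the paper is on the adaptivity of $a$: the paper applies Theorem~1 directly to the returned vector $(a\|0,\cdots,0)$ and multiplies the $d$ per-coordinate success probabilities to get $(1-e^{-2q(n)\epsilon^2})^d$, silently treating the data-dependent output as if it were a fixed difference, whereas you union bound over all $2^n$ a priori candidates (and over $j$ and $i$), paying a factor $2n\cdot 2^n$ against a per-candidate failure of $e^{-n}$. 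That is a real gain in rigor, and it is the only reading that \emph{explains} the choice $q(n)=\Theta(\tau(n)^2n^3/(1-\sigma)^2)$ rather than $\tau(n)^2n^2$ --- the paper's own accounting, with $\epsilon=\frac{1-\sigma}{\tau(n)d}$ and $2q(n)\epsilon^2=n^3/d^2\ge n$, never motivates the extra factor of $n$ --- at the modest cost of an overwhelming bound $1-2n(2/e)^n$ in place of the paper's claimed $1-ne^{-n}$. Your explicit Walsh--autocorrelation identity $\sum_{u:\,u\cdot\Delta=i}S_g(u)^2=\Pr_z[g(z\oplus\Delta)\oplus g(z)=i]$, which gives $\Pr[a\in Z_j^{i_j}]=\bar p_j^{\,q(n)}$ exactly, is also sharper than invoking Theorem~1 as a black box: it yields the per-candidate failure bound $(1-\epsilon)^{q(n)}\le e^{-q(n)\epsilon}$, which is stronger than $e^{-2q(n)\epsilon^2}$ for $\epsilon\le 1/2$, and it makes transparent why the quantity controlled by the BV samples is precisely the key-averaged differential probability. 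In short: same decomposition and same final Markov conversion, but your treatment of the selection effect closes a gap the paper elides.
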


\begin{proof}[Proof] $b$ has $d$ predicted bits, whose subscripts are $j_1,\cdots,j_d$. Appending $m$ zeros after the vector $a$ gives a $(n+m)$-bit vector $(a\|0,\cdots,0)$. Since $a\cdot(u_1,\cdots,u_n)=0$, it holds that
$$(a\|0,\cdots,0)\cdot(u_1,\cdots,u_n,u_{n+1},\cdots,u_{n+m})=0.$$
We can view the $(n+m)$-bit vector $(a\|0,\cdots,0)$ as the output of Algorithm 2 when it is applied to $Enc^t[j]$ for all $j\in\{j_1,j_2,\cdots,j_d\}$. Due to Theorem 1, the probability that, 
\begin{equation*}
\frac{|\{z\in \mathbb{F}_2^{n+m}|Enc^t[j](\,z\oplus(a\|0,\cdots,0)\,)\oplus Enc^t[j](z)=b_{j})\}|}{2^{n+m}}>1-\epsilon,\,\,
\end{equation*}
holds $\forall j\in\{j_1,j_2,\cdots,j_d\}$ is greater than $(1-e^{-2q(n)\epsilon^2})^d$. If the above inequality holds, then the number of $z$ that satisfies
\begin{equation}
Enc^t[j]\big(z\oplus(a\|0,\cdots,0)\big)\oplus Enc^t[j](z)=b_{j}
\end{equation}
for both $j=j_1$ and $j=j_2$ is greater than $2^{n+m}[2(1-\epsilon)-1]=2^{n+m}(1-2\epsilon)$. Likewise, the number of $z$ that satisfies Eq.$(3)$ for all $j=j_1,j_2,j_3$ is greater than $2^{n+m}(1-3\epsilon)$. By induction, the number of $z$ that satisfies Eq.$(3)$ for all $j\in\{j_1,j_2,\cdots,j_d\}$ is more than $2^{n+m}(1-d\epsilon)$. Therefore, the probability that
\begin{equation*}
\frac{|\{z\in \mathbb{F}_2^{n+m}|Enc^t(\,z\oplus(a\|0,\cdots,0)\,)\oplus Enc^t(z)\sim b)\}|}{2^{n+m}}>1-d\epsilon.
\end{equation*}
holds is greater than $(1-e^{-2q(n)\epsilon^2})^d$, which is equivalent to
\begin{equation}
\frac{|\{(x,k)\in \mathbb{F}_2^n\times \mathbb{F}_2^m|Enc^t_k(x\oplus a)\oplus Enc^t_k(x)\sim b\}|}{2^{n+m}}>1-d\epsilon.
\end{equation}
Let
$$
Z(k)=\frac{|\{x\in \mathbb{F}_2^n|Enc^t_k(x\oplus a)+Enc^t_k(x)\sim b\}|}{2^n}.
$$
Eq.$(4)$ means that $\mathbb{E}_k[Z(k)]>1-d\epsilon$. Here $\mathbb{E}_k[Z(k)]$ is the statistical expectation of $Z(k)$ while the variable $k$ follows the uniform distribution of $\mathbb{F}_2^m$. Therefore, when Eq.$(4)$ holds, we have
$$
\Pr_k\big[\,Z(k)>1-\tau(n)d\epsilon\,\big]>1-\frac{1}{\tau(n)}
$$
for any polynomial $\tau(n)$. This is because, if not, then ${\rm Pr}_{k\leftarrow\mathbb{F}_2^m}[1-Z(k)\geq \tau(n)d\epsilon]\geq\frac{1}{\tau(n)}$, which means
\begin{align*}
&\mathbb{E}_k[Z(k)]\\
=&1-\mathbb{E}_k[1-Z(k)]\\
\leq&1-\frac{1}{\tau(n)}\cdot \tau(n)d\epsilon\\
=&1-d\epsilon.
\end{align*}
This causes contradiction. Thus, as long as Eq.$(4)$ holds, the proportion of the keys satisfying $Z(k)>1-\tau(n)d\epsilon$ in $\mathbb{F}_2^m$ must be greater than $(1-\frac{1}{\tau(n)})$. Let $S$ be the set of all such keys. We have $|S|/|\mathbb{F}_2^m|>1-\frac{1}{\tau(n)}$, and for every $k\in S$, 
$$
Z(k)=\frac{|\{x\in \mathbb{F}_2^n|Enc^t_k(x\oplus a)+Enc^t_k(x)\sim b\}|}{2^n}>1-\tau(n)d\epsilon.
$$
Let $\epsilon=\frac{1-\sigma}{\tau(n)d}$. Since $q(n)=\frac{1}{2(1-\sigma)^2}\tau(n)^2n^3$, the probability that Eq.$(4)$ holds is larger than $1-ne^{-n}$. Therefore, with a overwhelming probability, there is a subset $S\subseteq\mathbb{F}_2^m$ satisfying that $|S|/|\mathbb{F}_2^m|>1-\frac{1}{\tau(n)}$, and for every $k\in S$,
\begin{align*}
\frac{|\{x\in \mathbb{F}_2^n|Enc^t_k(x\oplus a)+Enc^t_k(x)\sim b\}|}{2^n}>1-\tau(n)d\epsilon=\sigma,
\end{align*}
which means that the differential probability of $(a,b)$ is greater than $\sigma$ for more than $(1-\frac{1}{\tau(n)})$ proportion of keys in $\mathbb{F}_2^m$
\end{proof}

When implementing truncated differential attack, the adversaries first run Algorithm 2 to get $(a,b)$. Then with a overwhelming probability, for at least $(1-\frac{1}{\tau(n)})$ proportion of keys in $\mathbb{F}_2^m$ the probability of $(a,b)$ is greater than $\sigma$. Then the adversaries can use $(a,b)$ to determine the subkey of the last $r-t$ rounds like in traditional truncated differential attack. This attack  will work for at least $(1-\frac{1}{\tau(n)})$ proportion of keys in $\mathbb{F}_2^m$. The amount of plaintext pairs required in counting process is determined by the value of $S/N$. Based on experimental observations, about 20 to 40 appearances of right plaintext pairs are enough \cite{ref-journal14}. Therefore, about $\frac{40}\sigma$ plaintext pairs are sufficient.

For analyzing the complexity, we first calculate the amounts of universal gates and qubits required, then estimate the complexity of involved classical computing.

In Algorithm 2, BV algorithm is performed on each $Enc^t[j]$ for $q(n)$ times ($j\in\{1,2,\cdots,n\}$). Every call requires $2(m+n)+1+|Enc^t[j]|_Q$ quantum universal gates. Therefore, Algorithm 2 requires
\begin{align*}
&q(n)\sum_{j=1}^n\big[\,2(m+n)+1+|Enc^t[j]|_Q\big]\\
=&q(n)\big[\,(2m+1)n+2n^2+\sum_{j=1}^n|Enc^t[j]|_Q\big]\\
=&\frac{1}{2(1-\sigma)^2}\tau(n)^2n^3\big[\,2n^2+(2m+1)n+|Enc^t|_Q\big],
\end{align*}
universal gates. This number is a polynomial of $n$ and $m$.

The involved classical computing is to solve the linear system $\{x\cdot u=i_j|u\in W\}$ for each $j=1,2,\cdots,n$ and $i_j=0,1$. The adversaries need to solve a total of $2n$ systems, every system has $q(n)$ equations and $n$ unknowns. Therefore, the classical complexity of this part is $O(2q(n)n^3)=O(\frac{1}{(1-\sigma)^2}\tau(n)^2n^6)$.

Applying BV algorithm to every $Enc^t[j]$ requires $m+n+1$ qubits. Because BV algorithm is executed sequentially, the adversaries can reuse these qubits. Thus a total of $n+m+1$ qubits is enough for Algorithm 2.

\section{Quantum Boomerang Attack}
Since proposed in 1999, boomerang attack \cite{ref-proceeding21} has been widely used as a cryptanalysis method. The idea of boomerang cryptanalysis is to connect two differential paths with high probability so that the adversaries can attack more rounds. This attack was proposed due to the reality that, when constructing differential characteristics of block ciphers, the probability of differential rapidly decreases as the round number increases. It works for the case where it is difficult to find a $(t_1+t_2)$-round differential characteristic of some block ciphers that has high probability, while it is possible to find $t_1$-round and $t_2$-round differential characteristics having high probability.
\begin{figure}[H]
\centering
\includegraphics[width=8cm]{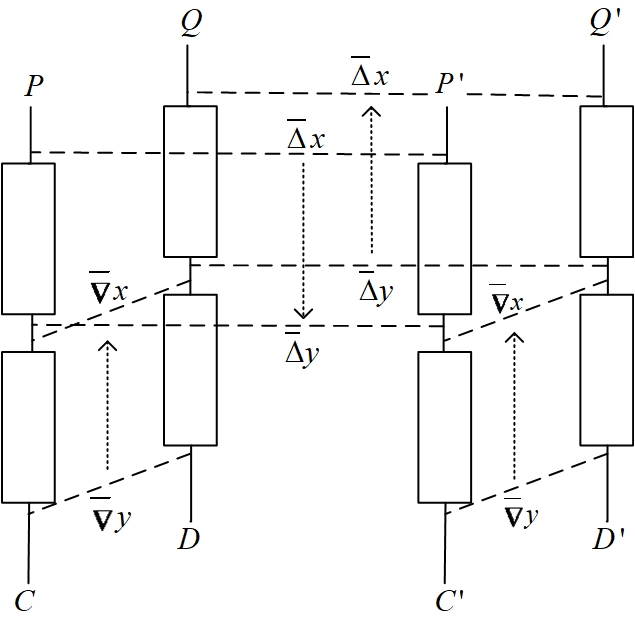}
\caption{Boomerang attack\label{Fig3}}
\end{figure}
Suppose $Enc_k^r(x)=Enc_k^{t_2}\circ Enc_k^{t_1}(x)$, where $t_1+t_2=r$, $Enc_k^{t_1}$ has a $p_1$-probability truncated differential $(\overline{\Delta} x,\overline{\Delta} y)$ and the inverse function ${Enc_k^{t_2}}^{-1}$ of $Enc_k^{t_2}$ has a $p_2$-probability truncated differential $(\overline{\nabla} x,\overline{\nabla} y)$. As shown in Figure 3, $P,P',Q,Q'$ are four plaintexts and the corresponding ciphertexts under $Enc_k^r$ are $C,C',D,D'$ respectively. $P,P'$ are said to satisfy the differential $(\overline{\Delta} x,\overline{\Delta} y)$ of $Enc_k^{t_1}$, if $P\oplus P'\sim \overline{\Delta} x$ and $Enc_k^{t_1}(P)\oplus Enc_k^{t_1}(P')\sim \overline{\Delta} y$. If both $P,P'$ and $Q,Q'$ satisfy the differential $(\overline{\Delta} x,\overline{\Delta} y)$ of $Enc_k^{t_1}$, and both $C,D$ and $C',D'$ satisfy the differential $(\overline{\nabla} x,\overline{\nabla} y)$ of ${Enc_k^{t_2}}^{-1}$, then $(P,P',Q,Q')$ is called a right quadruple. Such two differentials are called a boomerang distinguisher of $Enc_k^r$. Quadruples can be generated via the following method: (1) choose two plaintexts $(P,P')$ satisfying $P\oplus P\sim \overline{\Delta} x$ and denote the corresponding ciphertexts as $(C,C')$; (2) compute $D=C\oplus\overline{\nabla} y$ and $D'=C'\oplus \overline{\nabla} y$, and decrypt $D,D'$ to get the corresponding plaintexts $Q,Q'$; (3) test whether it holds that $Q\oplus Q'\sim\overline{\Delta} x$.

The probability of such $(P,P')$ satisfying differential $(\overline{\Delta} x,\overline{\Delta} y)$ is $p_1$. The probability of $(C,D)$ satisfying the differential $(\overline{\nabla} x,\overline{\nabla} y)$ is $p_2$. The probability of $(C',D')$ satisfying the differential $(\overline{\nabla} x,\overline{\nabla} y)$ is also $p_2$. Under the above three conditions, it naturally holds that $Enc_k^{t_1}(Q)\oplus Enc_k^{t_1}(Q')\sim\overline{\Delta} y$, so that the probability of $Q\oplus Q'\sim \overline{\Delta} x$ is $p_1$. In summary, the probability of a quadruple generated by the above method being a right quadruple is $(p_1p_2)^2$. For a random permutation, this probability is $2^{-d}$, where $d$ is the number of determined bits of $\overline{\Delta} x$. If $(p_1p_2)^2>2^{-d}$, then the block cipher can be distinguished from a random permutation through data analysis. A boomerang distinguisher can be used to search for the subkey involved in the last several rounds of the attacked block cipher.

The key of the boomerange attack is to find the boomerang distinguisher, namely, a $t_1$-round truncated differential of $Enc_k^{t_1}$ and a $t_2$-round truncated differential of ${Enc_k^{t_2}}^{-1}$ that have high probability. Thus the essence of boomerange attack is to find two truncated differentials that have high probability, which can be done by Algorithm 2. According to these analysis, we propose Algorithm 3 for finding boomerange distinguishers.

\begin{algorithm}[H]
\caption*{{\large \textbf{Algorithm 3}}}
\hangafter 1
\hangindent 4.1em
\textbf{Input$\,$}: $\,\,$The quantum circuit of $Enc^r$, a polynomial $\tau(n)$ and a constant $\sigma$ ($0<\sigma<1$) chosen by the adversaries.

\hangafter 1
\hangindent 4.7em
\textbf{Output}: a boomerange distinguisher of $Enc^r$.

\begin{algorithmic}[1]
\For{$t_1=1,2,\cdots,r$}
    \State \parbox[t]{\dimexpr\textwidth-\leftmargin-\labelsep-\labelwidth}{Let $t_2=r-t_1$;\strut} 
  \State \parbox[t]{\dimexpr\textwidth-\leftmargin-\labelsep-\labelwidth}{Run Algorithm 2 on $Enc^{t_1}$ with the parameter $\sigma$ and polynomial $\tau(n)$ to get an output $(a_1,b_1)$ or ``No'';\strut} 
  \State \parbox[t]{\dimexpr\textwidth-\leftmargin-\labelsep-\labelwidth}{Run Algorithm 2 on ${Enc^{t_2}}^{-1}$ with the parameter $\sigma$ and polynomial $\tau(n)$ to get an output $(a_2,b_2)$ or ``No'';\strut}
  \If{both outputs of Algorithm 2 are not ``No'' }
 \State \parbox[t]{\dimexpr\textwidth-\leftmargin-\labelsep-\labelwidth}{Return $\{(a_1,b_1),(a_2,b_2),t_1,t_2\}$;\strut}   
 \EndIf
 \EndFor 
\end{algorithmic}
\end{algorithm}

Step 3 of Algorithm 3 is for finding truncated differentials of $Enc^{t_1}$, of which the probability is larger than $\sigma$ for at least $1-\frac{1}{\tau(n)}$ proportion of keys in $\mathbb{F}_2^m$. Step 4 of Algorithm 3 is for finding truncated differentials of ${Enc^{t_2}}^{-1}$, of which the probability is also larger than $\sigma$ for at least $1-\frac{2}{\tau(n)}$ proportion of keys in $\mathbb{F}_2^m$.
These two differentials form a boomerange distinguisher $\{(a_1,b_1),(a_2,b_2)\}$ of $Enc_k^r$. The probability of the corresponding  quadruple is $\sigma^4$ for more than $1-\frac{2}{\tau(n)}$ proportion of keys in $\mathbb{F}_2^m$. 

Algorithm 3 calls Algorithm 2 on $Enc^{t_1}$ and ${Enc^{t_2}}^{-1}$ respectively for all $t_1=1,2,\cdots,n$ and $t_2=r-t_1$. Therefore, it requires
\begin{align*}
&\sum_{t_1=1}^{r}\frac{1}{2(1-\sigma)^2}\tau(n)^2n^3\big[\,4n^2+(4m+2)n+(|Enc^{t_1}|_Q+|{Enc^{t_2}}^{-1}|_Q)\big]\\
=&\sum_{t_1=1}^{r}\frac{1}{2(1-\sigma)^2}\tau(n)^2n^3\big[\,4n^2+(4m+2)n+(|Enc^{t_1}|_Q+|{Enc^{t_2}}|_Q)\big]\\
=&\sum_{t_1=1}^{r}\frac{1}{2(1-\sigma)^2}\tau(n)^2n^3\big[\,4n^2+(4m+2)n+|Enc^{r}|_Q\big]\\
=&\frac{r}{2(1-\sigma)^2}\tau(n)^2n^3\big[\,4n^2+(4m+2)n+|Enc^{r}|_Q\big]
\end{align*}
universal gates. This number is a polynomial of $n$ and $m$.

Since Algorithm 2 is called sequentially in Algorithm 3, the adversaries can reuse the qubits. Thus $m+n+1$ qubits is enough for Algorithm 3.

\section{Results and Discussion}

In this work, we further explore the 
superior computing capacity of quantum computing when applied to the field of cryptanalysis. We use BV algorithm to enhance two variants of differential cryptanalysis: truncated differential cryptanalysis and boomerang cryptanalysis. We constructed two quantum algorithms which can find truncated differentials and boomerang distinguishers of block ciphers respectively. We prove that, with a overwhelming probability, the truncated differentials or boomerang distinguishers found by our algorithms have  high probability for the most keys in key space. 

The complexity of our algorithms is  polynomial level, and the adversaries can realize them in Q1 model. Compared to many proposed quantum attack algorithms \cite{ref-proceeding3,ref-proceeding4,ref-journal2,ref-proceeding5,ref-journal3,ref-journal4} which demand quantum queries, our algorithms are more practical for realizing. Classical automatic tools for searching truncated differentials with high probability or boomerang distinguishers was unable to consider all details of S-boxes when the S-boxes are not  small-scale. For example, facing the widely used 8-bit S-boxes, the classical searching tools can only work for extremely few rounds. By comparison, our algorithms fully utilize the strengths of quantum computing to make up this shortcoming. Their quantum circuit strictly compute the S-boxes when performing the operator $U_{Enc^t}$ and only have polynomial quantum gates. Moreover, classical truncated differential and boomerang attacks are unable to consider the influence of key scheduling in the attack model of single-key, but the proposed algorithms incorporate the key scheduling into the operator $U_{Enc^t}$ and thus fully consider the impact of the key scheduling.

For further research, reducing the quantum complexity of the proposed algorithms is a meaningful direction. It is also a interesting direction to explore the possible applications of more quantum algorithms in other cryptanalytic tools, like integral and algebraic attacks. We believe 
the study of quantum cryptanalysis is crucial for designing quantum-secure cryptosystems to prepare for the arrival of quantum computers.

\section*{Acknowledgement}
This research was supported by Beijing Natural Science Foundation (no.4234084) and the Open Research Fund of Key Laboratory of Cryptography of Zhejiang Province (no. ZCL21012).

\end{document}